\newlength\imagewidth
\newlength\figwidth
\newlength\sfigwidth
\newlength\vfigskip
\renewcommand{\vec}[1]{\mathbf{#1}}
\definecolor{dgreen}{rgb}{0,.6,0}
\newtheorem{property}{Property}
\newtheorem{corollary}{Corollary}
\begin{document}

\begin{frontmatter}


\title{On the cryptanalysis of Fridrich's chaotic image encryption scheme}

\author[cn-xtu-cie]{Eric Yong Xie}

\author[cn-xtu-cie]{Chengqing Li\corref{corr}}
\ead{DrChengqingLi@gmail.com}

\author[cn-gdu]{Simin Yu}

\author[cn-cas]{Jinhu L\"u}

\cortext[corr]{Corresponding author.}

\address[cn-xtu-cie]{
Hunan Province Cooperative Innovation Center for Wind Power Equipment and Energy Conversion,\\
College of Information Engineering, Xiangtan University, Xiangtan 411105, Hunan, China}

\address[cn-gdu]{College of Automation, Guangdong University of Technology, Guangzhou 510006, Guangdong, China}

\address[cn-cas]{Academy of Mathematics and Systems Sciences, Chinese Academy of Sciences, Beijing 100190, China}

\begin{abstract}
Utilizing complex dynamics of chaotic maps and systems in encryption was studied comprehensively in the past two and a half decades.
In 1989, Fridrich's chaotic image encryption scheme was designed by iterating chaotic position permutation and value substitution some rounds,
which received intensive attention in the field of chaos-based cryptography. In 2010, Solak \textit{et al.} proposed a chosen-ciphertext attack on
the Fridrich's scheme utilizing influence network between cipher-pixels and the corresponding plain-pixels.
Based on their creative work, this paper scrutinized some properties of Fridrich's scheme with concise mathematical language.
Then, some minor defects of the real performance of Solak's attack method were given. The work provides
some bases for further optimizing attack on the Fridrich's scheme and its variants.
\end{abstract}
\begin{keyword}
Chaotic encryption \sep chosen-ciphertext attack\sep cryptanalysis\sep differential attack.
\end{keyword}
\end{frontmatter}

\section{Introduction}

The complex dynamics of chaotic systems attracts researchers to utilize them as a new way to
design secure and efficient encryption schemes \cite{Xiangtao:Chaos:2007,LiuH:cat:OLT14,SYu:ARM:CASVT15,YCZhou:Chaotic:TC2015,Hua2016IS,Latif:SPN:SP16}. The first chaos-based encryption scheme was proposed in 1989 \cite{Matthews:derivation:Cryptologia89},
where a chaotic equation
\begin{equation}
g(x)=(\beta+1)(1+1/\beta)^\beta x(1-x)^\beta, \quad \beta\in[1, 4]
\label{eq:derivation}
\end{equation}
was derived to generate pseudo-random number sequence and then mask the plaintext with modulo addition.
Soon after publication of \cite{Matthews:derivation:Cryptologia89}, it was pointed out that period of the sequence
generated by iterating Eq.~(\ref{eq:derivation}) may be very short, especially when it is implemented with small computing precision,
which may seriously compromise the security level of the scheme \cite{Wheeler:Problems:Cryptologia89}.
Some special defects and properties of chaotic systems may facilitate cryptanalysis of chaos-based encryption schemes, e.g. chaotic synchronization \cite{Beth:Chaos:Crypto94}, chaotic ergodicity \cite{Arroyo:ergodicity:IJMP:2009},
and parameter identification of chaotic system \cite{Solak:identification:CAS04}.
The inadequate combination of chaotic dynamics and encryption architectures makes the complexity of recovering its secret key from some pairs of plain-texts and the corresponding cipher-texts,
encrypted with the same secret key, lower than that of brute-force attack \cite{Biham:Chaotic:Crypto91,Arroyo:Permutation:SP13,Li:hyperchaotic:ND2013,Yap:cryptanalysis:ND15}. Some general rules on evaluating security of chaos-based encryption schemes can be found in \cite{AlvarezLi:Rules:IJBC2006,ShujunLi:ChaosBook2011}.

As quantitatively analyzed in \cite{Lcq:Optimal:SP11,Licq:hierarchical:SP2016}, any position permutation-only encryption scheme can be efficiently broken
with only $O(\lceil\log_L(H\cdot W) \rceil)$ known/chosen plaintexts and the computational complexity of magnitude $O(H\cdot W\cdot \lceil\log_L(H\cdot W) \rceil)$,
where $L$ denotes the number of different gray-values of the plaintexts, and $H\times W$ (height$\times$width) is the size of the encryption scheme's \textit{permutation domain} , whose every element denotes the mapping relation between
the relative position of a permuted element in the plaintext and that in the corresponding ciphertext. As suggested in \cite{Shannon:Entropy:BSTJ49}, iterating position permutation and value substitution sufficient rounds
can make an encryption scheme very strong against all kinds of attacks.
Considering significant impact of the structure of Fridrich's scheme on a great number of chaotic encryption schemes, Solak's chosen-ciphertext attack method proposed in \cite{Solak:Fridrich:IJBC10}
can be considered as a breakthrough in the field of chaotic cryptanalysis.

According to the record of \textit{Web of Science}, both papers \cite{Fridrich:IJBC98} and \cite{YaobinMao:CSF2004}
have been cited more than 500 times up to Aug 2016. Inspired by using space network (function graph) for attacking hash function in \cite{Leurent:AttackMAC:ASIACRYPT2013},
we re-summarized some properties of Fridrich's chaotic image encryption scheme with the methodology of complex networks (binary matrix). Then, we further evaluated the real performance
of Solak's chosen-ciphertext attack method and found that it owns some minor defects. In addition, the performance of extension of the attack idea to Chen's scheme proposed in \cite{YaobinMao:CSF2004,Mao:3Dbakermap:IJBC04} was also briefly evaluated.

The rest of this paper is organized as follows. Section~\ref{sec:encryptscheme} concisely introduces Fridrich's chaotic image encryption scheme.
Our cryptanalytic results of the scheme are presented in Sec.~\ref{sec:cryptanalysis} in detail. The last section
concludes the paper.

\section{Fridrich's chaotic image encryption scheme}
\label{sec:encryptscheme}

The plaintext encrypted by Fridrich's chaotic image encryption scheme is a gray-scale image of size $H\times W$
\footnote{For simplicity, use $HW$ to denote $H\cdot W$.}, which can be denoted by a sequence of length $HW$ in domain $\mathbb{Z}_{256}$, $\vec{I}=[I(i)]_{i=0}^{HW-1}$,
by scanning it in the raster order. The corresponding cipher-image is $\vec{I}'=[I'(i)]_{i=0}^{HW-1}$.
The framework of Fridrich's scheme can be described as follows.
\begin{itemize}
\item \textit{Encryption procedure}:

\par 1) \emph{Position Permutation}: for $i=0\sim HW-1$, do
\begin{equation}
\vec{I}^*(w(i))=\vec{I}(i),
\label{eq:diffusion}
\end{equation}
where \textit{permutation matrix} $\vec{W}=\{w(i)\}_{i=0}^{HW-1}$ satisfies $w(i_1)\neq w(i_2)$ for any $i_1\neq i_2$.

\par 2) \emph{Value Substitution}: for $i=0\sim HW-1$, carry out substitution function
\begin{equation}
\vec{I}'(i)=\vec{I}^*(i)\boxplus g(\vec{I}'(i-1))\boxplus h(i),
\label{Frid:confusion}
\end{equation}
where $a\boxplus b=(a+b)\bmod 256$, $g$: $\mathbb{Z}_{256}\rightarrow \mathbb{Z}_{256}$
is a fixed nonlinear function, $\vec{H}=\{h(i)\}_{i=0}^{HW-1}$ is a pseudo-random number sequence, and $\vec{I}'(-1)$ is a pre-defined parameter $c$.

\par 3) \emph{Repetition}: set $\vec{I}=\vec{I}'$ and repeat the above two steps for $r-1$ times, where $r$ is a predefined
positive integer.

\item \textit{Decryption procedure}: it is similar to the encryption procedure except that the two main encryption
steps are carried out in a reverse order, the permutation matrix $\vec{W}$ is replaced by its inverse, and Eq.~(\ref{Frid:confusion}) is replaced
by equation
\begin{equation}
\vec{I}^*(i)=\vec{I}'(i)\boxminus g(\vec{I}'(i-1))\boxminus h(i),
\label{Frid:deconfusion}
\end{equation}
where $(a\boxminus b)=(a-b+256)\bmod 256$.
\end{itemize}

Incorporating Eq.~(\ref{eq:diffusion}) into Eq.~(\ref{Frid:confusion}), one can get
\begin{equation}
\vec{I}'(i)=\vec{I}(w^{-1}(i))\boxplus g(\vec{I}'(i-1))\boxplus h(i),
\label{eq:encryption1r}
\end{equation}
where $\vec{W}^{-1}=\{w^{-1}(i)\}_{i=0}^{HW-1}$ is the inverse of $\vec{W}$.
Combining Eq.~(\ref{eq:diffusion}) and Eq.~(\ref{Frid:deconfusion}), one has
\begin{equation}
\vec{I}(i)=\vec{I}'(w(i))\boxminus g(\vec{I}'(w(i)-1))\boxminus h(w(i)).
\label{eq:decryption1r}
\end{equation}
Since publication of \cite{Fridrich:IJBC98}, a great number of methods have been proposed to modify some elements of the framework of
Fridrich's scheme from various aspects, such as using novel methods to generate the permutation matrix; defining new concrete function $g$ in Eq.~(\ref{Frid:confusion}); changing the involved operations
in the substitution function.

\section{Cryptanalysis}
\label{sec:cryptanalysis}

To facilitate further security analysis of Fridrich's scheme, we re-present some properties of Fridrich's scheme reported in \cite[Sec. 3]{Solak:Fridrich:IJBC10}
with the methodology of matrix theory. Some critical details are appended to make their description complete, especially the conditions in Property~\ref{property:w0}
were found by us in the experiments.

\subsection{Some properties of Fridrich's scheme}

\begin{property}
There exists \textit{influence path} between the $i$-th pixel of $\vec{I}$ and the $j$-th one of $\vec{I}'$ (the value of the former may be influenced by that of the latter)
if and only if
\begin{equation*}
(\widehat{\mathbf{T}})^r(i,j)>0,
\end{equation*}
where $\widehat{\mathbf{T}}=\mathbf{P}\cdot \mathbf{T}$,
\begin{equation*}
\mathbf{P}(i, j)=
\begin{cases}
1 & \mbox{if } j=w(i);\\
0 & \mbox{otherwise},
\end{cases}
\end{equation*}
and
\begin{equation*}
\mathbf{T}=
\left(\begin{array}{ccccc}
1  &    &   &   &    \\
1  & 1  &   &   \multicolumn{2}{c}{\raisebox{1ex}[0pt]{\huge \textrm{0}}}     \\
   & 1  & 1 &   &   \\
\multicolumn{2}{c}{\raisebox{-1ex}[0pt]{\huge \textrm{0}}} &  \ddots & \ddots &  \\
   &    &   & 1 &  1
\end{array}\right)_{HW\times HW}.
\end{equation*}
\end{property}
\begin{proof}
First, we consider the case when $r$ is equal to one. Observing Eq.~(\ref{Frid:deconfusion}), one can see that relation between $\vec{I}^*$ and $\vec{I}'$
can be presented by the matrix $\mathbf{T}$: the value of the $i$-th pixel of $\vec{I}^*$ is influenced by that of the $j$-th one of $\vec{I}'$ if $\mathbf{T}(i,j)>0$
and not otherwise, where
\begin{equation}
\mathbf{T}(i, j)=
\begin{cases}
1 & \mbox{if } 0\leq i-j\leq 1;\\
0 & \mbox{otherwise}.
\end{cases}
\label{eq:matrixT}
\end{equation}
Permutation operation in Eq.~(\ref{eq:diffusion}) can be presented as multiplication of the permuted vector and an elementary matrix:
\begin{equation}
\vec{I}^*=\vec{I}\cdot \mathbf{P}.
\end{equation}
So, one can assure that the value of the $i$-th pixel of $\vec{I}$ may be influenced by that of the $j$-th pixel of $\vec{I}'$ if $\widehat{\mathbf{T}}(i,j)>0$
and not otherwise. Note that the influence may be cancelled by the modulo operation in Eq.~(\ref{eq:decryption1r}).
If $r>1$, one can easily derive that the value of $i$-th pixel of $\vec{I}$ is influenced by that of the $j$-th one of $\vec{I}'$ if and only if
\begin{equation*}
\left(\widehat{\mathbf{T}}\right)^r(i,j)>0.
\end{equation*}
\end{proof}

\begin{property}
If $w(x)+1=w(y)$, $w(y)+1=w(z)$, difference of two sets of entries of $\left(\widehat{\mathbf{T}}\right)^r$ is a subset of another similar set:
\begin{multline}
\{j\,|\,(\widehat{\mathbf{T}})^r(y, j)>0\} \backslash \{j\,|\,(\widehat{\mathbf{T}})^r(x, j)>0\}\subset \\
\{j\,|\,(\widehat{\mathbf{T}})^r(z, j)>0\},
\label{eq:condtionxi}
\end{multline}
where $x, y, z\in \mathbb{Z}_{HW}$.
\end{property}
\begin{proof}
From the definition of matrix multiplication and matrix (\ref{eq:matrixT}), one has
\begin{IEEEeqnarray}{rCl} 	
\IEEEeqnarraymulticol{3}{l}{ 	
(\widehat{\mathbf{T}})^r(x, j)}\nonumber\\
& = & \sum_{k=1}^{HW} \widehat{\mathbf{T}}(x , k) \cdot (\widehat{\mathbf{T}})^{r-1}(k, j) \nonumber\\
& = & \sum_{k=1}^{HW} \sum_{l=1}^{HW} \mathbf{P}(x, l) \cdot {\mathbf{T}}(l, k) \cdot (\widehat{\mathbf{T}})^{r-1}(k, j) \nonumber\\
& = & \sum_{k=1}^{HW} \mathbf{P}(x, w(x)) \cdot {\mathbf{T}}(w(x), k) \cdot (\widehat{\mathbf{T}})^{r-1}(k, j)  \nonumber\\
& = & \sum_{k=1}^{HW} \mathbf{T}(w(x), k) \cdot (\widehat{\mathbf{T}})^{r-1}(k, j)  \label{eq:multiplication}
.
\label{eq}
\end{IEEEeqnarray}
As $\mathbf{T}(w(x), k)=0$ when $k\not\in\{w(x)-1, w(x)\}$, one can get the following two points when $r>1$:
\begin{itemize}
\item $(\widehat{\mathbf{T}})^r(x, j)>0$ if and only if
$(\widehat{\mathbf{T}})^{r-1}(w(x), j)>0$ or $(\widehat{\mathbf{T}})^{r-1}(w(x)-1, j)>0$ when $w(x)\neq 0$;\\
\item $(\widehat{\mathbf{T}})^r(x, j)>0$ if and only if
$(\widehat{\mathbf{T}})^{r-1}(w(x), j)>0$ when $w(x)=0$.
\end{itemize}
This means that
\begin{IEEEeqnarray}{rCl} 	
\IEEEeqnarraymulticol{3}{l}{	
\{j\,|\,(\widehat{\mathbf{T}})^r(x, j)>0\}
}\nonumber\\\;\; 	
& = &
\begin{cases}
\{j\,|\,(\widehat{\mathbf{T}})^{r-1}(w(x), j)>0\}         & \mbox{if } w(x)=0;\\
\{j\,|\,(\widehat{\mathbf{T}})^{r-1}(w(x), j)>0\}\cup     & \\
\{j\,|\,(\widehat{\mathbf{T}})^{r-1}(w(x)-1, j)>0\}       & \mbox{otherwise}.
\end{cases}
\label{eq:T1x}
\end{IEEEeqnarray} 	

Since $w(y)\neq 0$, $w(z)\neq 0$, one can deduce the following two sets of equations similarly:
\begin{IEEEeqnarray}{rCl} 	
\IEEEeqnarraymulticol{3}{l}{ 	
\{j\,|\,(\widehat{\mathbf{T}})^r(y, j)>0\}}\nonumber\\
& = & \{j\,|\,(\widehat{\mathbf{T}})^{r-1}(w(y), j)>0\}\cup \nonumber\\
& & {\;  } \{j\,|\,(\widehat{\mathbf{T}})^{r-1}(w(y)-1, j)>0\},\nonumber\\
& = & \{j\,|\,(\widehat{\mathbf{T}})^{r-1}(w(y), j)>0\}\cup  \nonumber\\
& & {\;  } \{j\,|\,(\widehat{\mathbf{T}})^{r-1}(w(x), j)>0\}, \label{eq:T1yj}
\end{IEEEeqnarray} 	
and
\begin{IEEEeqnarray}{rCl} 	
\IEEEeqnarraymulticol{3}{l}{ 	
\{j\,|\,(\widehat{\mathbf{T}})^r(z, j)>0\}}\nonumber\\
&=&\{j\,|\,(\widehat{\mathbf{T}})^{r-1}(w(z), j)>0\}\cup \nonumber\\
& & {\;  }\{j\,|\,(\widehat{\mathbf{T}})^{r-1}(w(z)-1, j)>0\},\nonumber\\
&=&\{j\,|\,(\widehat{\mathbf{T}})^{r-1}(w(z), j)>0\}\cup \nonumber\\
& & {\;  }\{j\,|\,(\widehat{\mathbf{T}})^{r-1}(w(y), j)>0 \}.
\label{eq:T1zy}
\end{IEEEeqnarray} 	
Using relation between absolute complement and relative complement of a set, one can
obtain difference of the left parts of Eq.~(\ref{eq:T1yj}) and Eq.~(\ref{eq:T1x}),
\begin{IEEEeqnarray*}{rCl} 	
\IEEEeqnarraymulticol{3}{l}{	
\{j\,|\,(\widehat{\mathbf{T}})^r(y, j)>0\} \backslash \{j\,|\,(\widehat{\mathbf{T}})^r(x, j)>0\}
}\nonumber\\\;\; 	
& = &
\begin{cases}
\{j\,|\,(\widehat{\mathbf{T}})^{r-1}(w(y), j)>0\}  &\\
{\ }\cap\{j\,|\,(\widehat{\mathbf{T}})^r(x, j)=0\} & \mbox{if } w(x)=0;\IEEEeqnarraynumspace\\
\{j\,|\,(\widehat{\mathbf{T}})^{r-1}(w(y), j)>0\}  &\\
{\ }\cap\{j\,|\,(\widehat{\mathbf{T}})^{r-1}(w(x), j)=0\}    &\\
{\ }\cap\{j\,|\,(\widehat{\mathbf{T}})^{r-1}(w(x)-1, j)=0\}  & \mbox{otherwise}.
\end{cases}
\end{IEEEeqnarray*}
For either case of the above equation, one can get
\begin{multline}
\{j\, |\, (\widehat{\mathbf{T}})^r(y, j)>0\} \backslash \{j\, |\, (\widehat{\mathbf{T}})^r(x, j)>0\}\subset \\
\{j \,|\,(\widehat{\mathbf{T}})^r(z, j)>0\}
\end{multline}
by observing the right part of Eq.~(\ref{eq:T1zy}).
\end{proof}
\begin{corollary}
If $w(x)=0$, $w(y)=1$, then
\begin{equation}
\left\{j\,|\,(\widehat{\mathbf{T}})^r(x, j)>0\right\} \subset \left\{j\,|\,(\widehat{\mathbf{T}})^r(y, j)>0\right\}.
\label{eq:condtionx1}
\end{equation}
\end{corollary}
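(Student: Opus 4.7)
The plan is to deduce the corollary directly from the recursion for $\{j \mid (\widehat{\mathbf{T}})^r(x,j)>0\}$ already established inside the proof of Property~2, namely Eq.~(\ref{eq:T1x}), without invoking Property~2 itself (since Property~2 requires three indices $x,y,z$ with $w(x)+1=w(y)$ and $w(y)+1=w(z)$, which is one index more than we have here).

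First, I would specialize Eq.~(\ref{eq:T1x}) to the index $x$ with $w(x)=0$. This falls into the first case, so
\begin{equation*}
\{j\,|\,(\widehat{\mathbf{T}})^r(x, j)>0\}=\{j\,|\,(\widehat{\mathbf{T}})^{r-1}(0, j)>0\}.
\end{equation*}
Next, I would apply the same recursion to $y$, noting that $w(y)=1\neq 0$, so the second case of Eq.~(\ref{eq:T1x}) applies and gives
\begin{equation*}
\{j\,|\,(\widehat{\mathbf{T}})^r(y, j)>0\}=\{j\,|\,(\widehat{\mathbf{T}})^{r-1}(1, j)>0\}\cup\{j\,|\,(\widehat{\mathbf{T}})^{r-1}(0, j)>0\}.
\end{equation*}
Comparing the two displays, the right-hand side of the first is one of the two sets unioned to form the right-hand side of the second, so Eq.~(\ref{eq:condtionx1}) follows immediately.

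The base case $r=1$ deserves a brief check: when $r=1$, the recursion as written involves $(\widehat{\mathbf{T}})^0$, which is the identity, so $\{j\,|\,(\widehat{\mathbf{T}})^1(x,j)>0\}=\{w(x)\}=\{0\}$ and $\{j\,|\,(\widehat{\mathbf{T}})^1(y,j)>0\}=\{w(y),w(y)-1\}=\{0,1\}$, and the inclusion $\{0\}\subset\{0,1\}$ holds trivially; for $r>1$ the derivation above is valid as-is. The only subtle point — and the one that makes this corollary rather than a direct instance of Property~2 — is the boundary behavior of $\mathbf{T}$ at row $w(x)=0$: because $\mathbf{T}(0,k)$ is nonzero only at $k=0$ (there is no $k=-1$ term), the recursion for $x$ collapses to a single set rather than a union, which is exactly what forces the inclusion to go from $x$ into $y$ (and not the other direction).
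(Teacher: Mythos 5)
Your proposal is correct and follows essentially the same route as the paper, which proves the corollary by comparing Eq.~(\ref{eq:T1x}) (first case, since $w(x)=0$) with the instance of the recursion for $y$ (second case, since $w(y)=1\neq 0$), so that the set for $x$ appears as one term of the union forming the set for $y$. Your explicit check of the $r=1$ base case and the remark on the boundary behavior of $\mathbf{T}$ at row $0$ are sensible additions but do not change the argument.
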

\begin{proof}
This corollary can be easily proofed by comparing Eq.~(\ref{eq:T1x}) and Eq.~(\ref{eq:T1yj}).
\end{proof}

\begin{property}
If $w(0)\neq 1$, $|w(x_1)-w(x_2)| \neq 1$ for any $x_1, x_2$ satisfying $|x_1-x_2|\in\{1, \cdots, 2^{r-1}-1\}$,
one has
$$w(x)=0,$$
where the $x$-th row of $(\widehat{\mathbf{T}})^r$ is its row vector containing
minimal number of non-zero element, i.e.,
\begin{equation}
|\{j\,|\,(\widehat{\mathbf{T}})^r(x, j)>0\}|=\min{\left\{ |\{j\,|\,(\widehat{\mathbf{T}})^r(i, j)>0\} |\right\}}_{i=0}^{HW-1},
\label{eq:minimal}
\end{equation}
and $r\ge 2$.
\label{property:w0}
\end{property}
\begin{proof}
As shown in Fig.~\ref{fig:pattern}, there exist and only exist three basic patterns for reducing the number of influencing cipher-pixels
for a plain-pixel. If $|w(x_1)-w(x_2)| \neq 1$ for any $x_1, x_2$ satisfying $|x_1-x_2|\in\{1, \cdots, 2^{r-1}-1\}$, the first two
patterns in Fig.~\ref{fig:pattern} can be excluded. Furthermore, the third one can be eliminated if $w(0)\neq 1$. (A concrete counter example is shown in Fig.~\ref{fig:SolakAttacks2}.)
Under the given condition in this property, there is only one influence path between any pair of cipher-pixel and plain-pixel.
So, the $x$-th row of $(\widehat{\mathbf{T}})^r$ has $2^{r-1}$ non-zero elements while other rows all have $2^{r}$ non-zero elements.
Then, one can correctly recover $w(x)=0$ by checking condition~(\ref{eq:minimal}).
\end{proof}

\begin{figure}[!htb]
\centering
\includegraphics[width=\imagewidth]{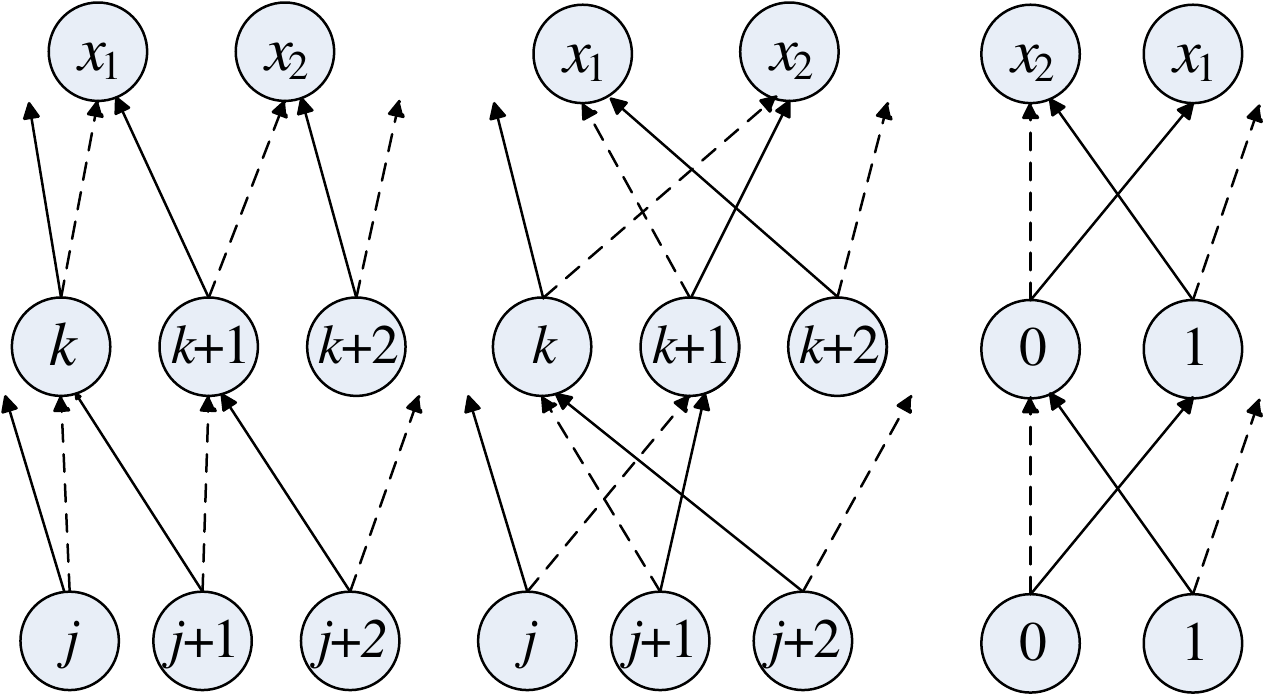}
\caption{Three basic patterns for reducing incoming influence paths of a plain-pixel, where
the solid arrow indicates influence caused by $\vec{I}'(w(i))$ and the dashed arrow denotes that caused by $\vec{I}'(w(i)-1)$,
and the symbol inside a circle denotes the index number of pixel (the same hereinafter).}
\label{fig:pattern}
\end{figure}

\begin{figure}[!htb]
\centering
\includegraphics[width=\imagewidth]{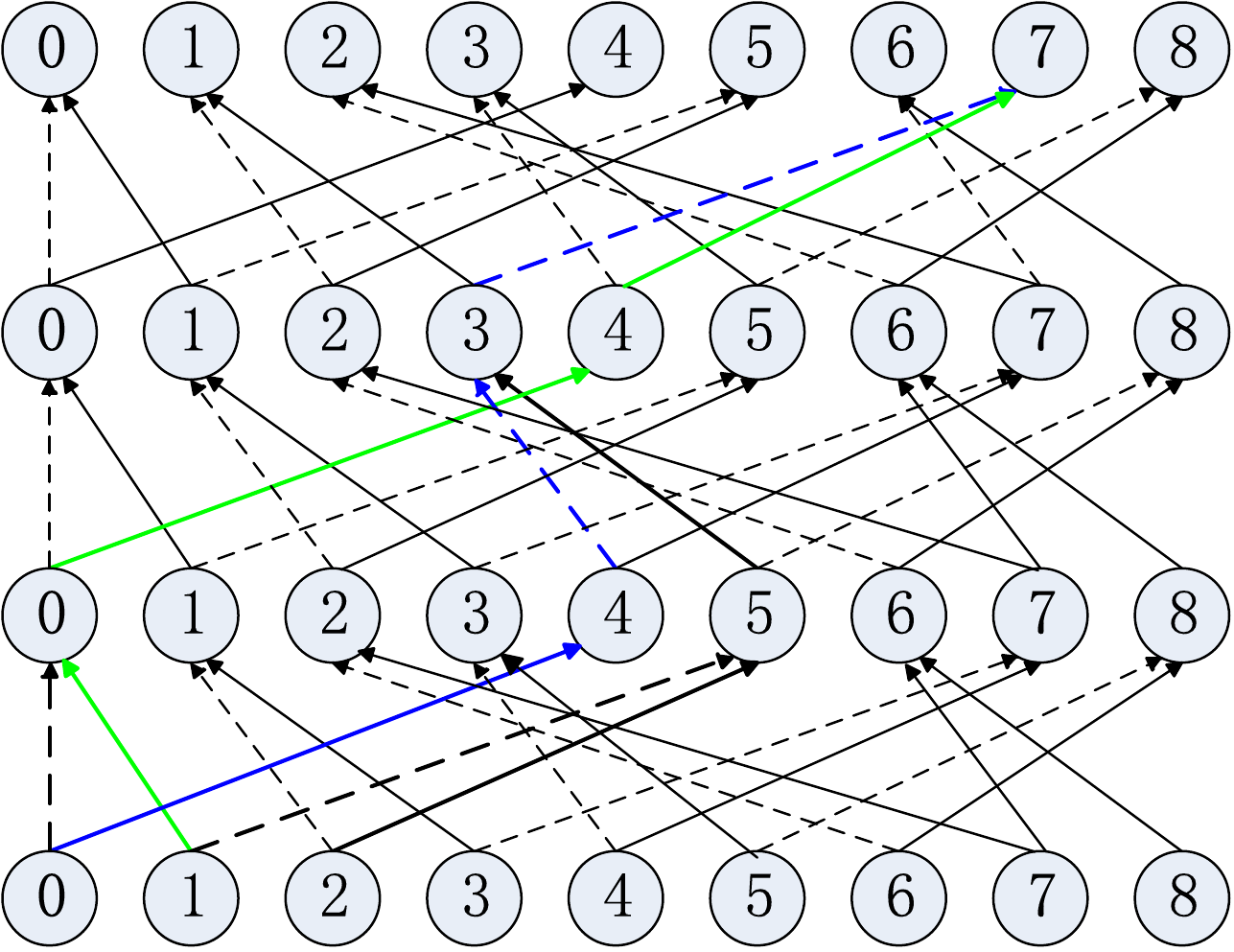}
\caption{A counterexample on Property~3, where $\vec{W}=[1, 3, 7, 5, 0, 2, 8, 4, 6]$, and $r=3$.}
\label{fig:SolakAttacks2}
\end{figure}

\subsection{Description of Solak's chosen-ciphertext attack}

To facilitate the discuss in the next subsection, we concisely describe the process of Solak's chosen-ciphertext attack method given in \cite{Solak:Fridrich:IJBC10}.
Let $\vec{\hat{W}}^{-1}=\{\hat{w}^{-1}(i)\}_{i=0}^{HW-1}$ denote the estimated version of $\vec{W}^{-1}$.
After recovering the approximate version of the influence matrix between cipher-text and the corresponding plain-text,
the Solak's attack method prunes the search space of size $HW!$ (factorial of $HW$) with the following steps:
\begin{itemize}
\item \textit{Step 1)} Set $\hat{w}^{-1}(0)=x_0$, where $x_0$ is the row number satisfying condition~(\ref{eq:minimal}) itself.

\item \textit{Step 2)} Let $\hat{w}^{-1}(1)=x_1$, where $x_1\in \vec{A}\backslash\vec{B}$,
$\vec{A}$ is the right-hand set in condition~(\ref{eq:condtionx1}) with $x=x_0$, and $\vec{B}=\{x_0\}$.

\item \textit{Step 3)} For $i=2\sim HW-1$, let $\hat{w}^{-1}(i)=x_i$, where $x_i\in \vec{A}\backslash\vec{B}$,
$\vec{A}$ is the right-hand set in condition~(\ref{eq:condtionxi}) with $x=i-1$, $y=i$, and $\vec{B}=\{x_j\}_{j=0}^{i-1}$.
If $\vec{A}$ is empty, one can assure that the current value of $\hat{w}^{-1}(i-1)$ is wrong
and process the search with its another candidate value.

\item \textit{Step 4)} Repeat the above steps iteratively till variable $i$ reaches the maximal value, $HW-1$.
\end{itemize}


\subsection{Real performance of Solak's chosen-ciphertext attack}

Observing Eq.~(\ref{eq:decryption1r}), one can see that the $i_0$-th plain-pixel can be only influenced by one cipher-pixel in each encryption round
if $w(i_0)=0$. After accumulation of $r$ rounds of encryption, the number of cipher-pixels influencing the $i_0$-th plain-pixel is smaller than
that influencing other plain-pixels in a very high probability, which serves as the basis of \textit{Step 1)}. The scope of the former is $[1, 2^{r-1}]$. In contrast,
the scope of the latter is $[r+1, 2^r]$, whose lower bound can be achieved when there exist
$x$ and $t=r$ satisfying $|w(x+i+1)-w(x+i)|=1$ for $i=0\sim$ $t-1$. Observing the right part of Fig.~\ref{fig:SolakAttack_counter}, one can see that the number of cipher-pixels influencing the $x$-th plain-pixel shifts from $r+1$ to $2^r$ monotonously when $t$ is increased from $0$ to $r$. Property~\ref{property:w0} only presents an extreme condition assuring the estimation in \textit{Step 1)} is definitely right.
Interestingly, the condition in Property~\ref{property:w0} is very similar to the problem of neighbors remain neighbors after random rearrangements, discussed in \cite{Abramson:Permute:AMS1967}.
\begin{figure}[!htb]
\centering
\includegraphics[width=1.1\imagewidth]{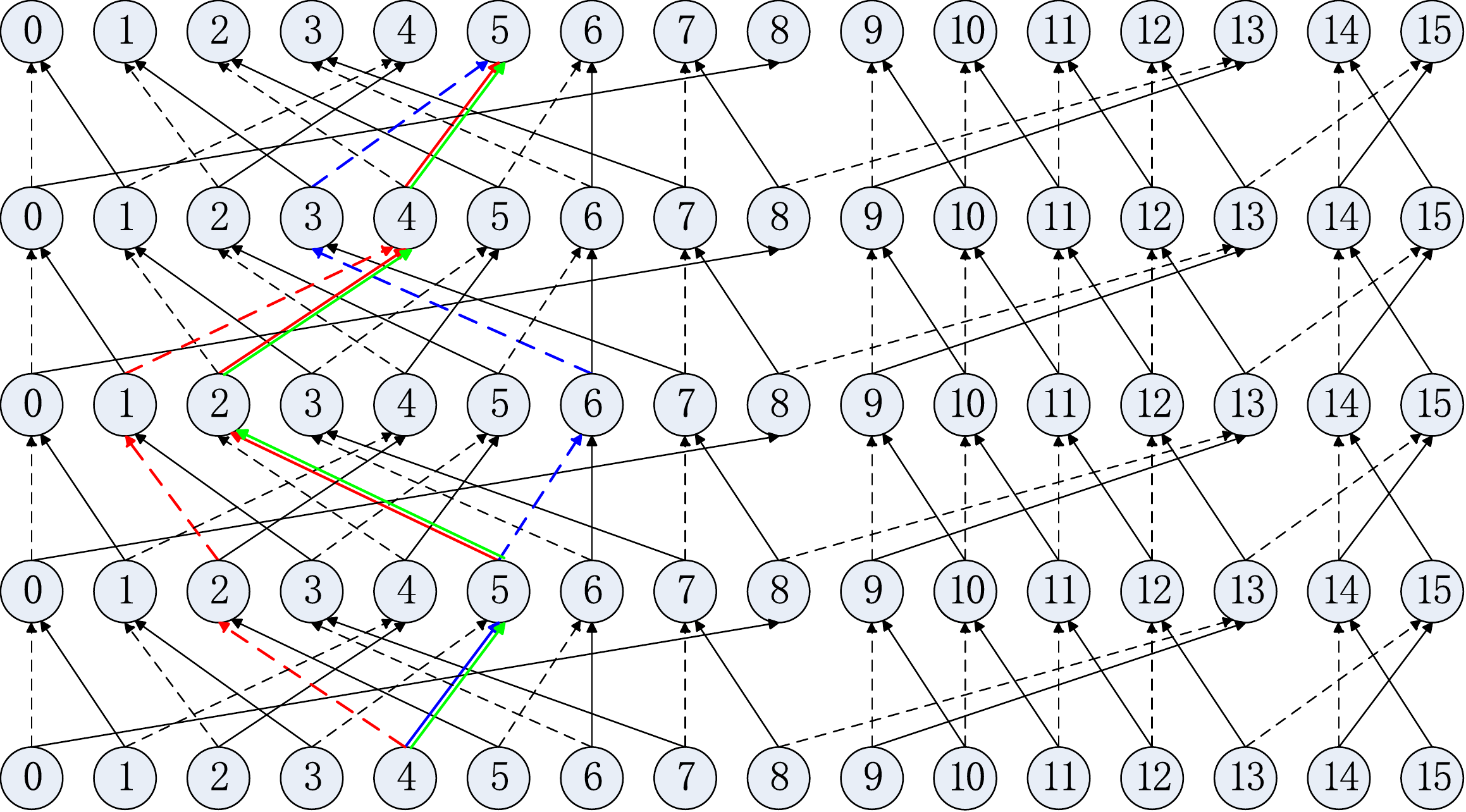}
\caption{Another counterexample about Property~3.}
\label{fig:SolakAttack_counter}
\end{figure}

Now, we give a counterexample to show deficiency of Solak's chosen-ciphertext attack method. When $\vec{W}=[1, 3, 5, 7, 2, 4, 6, 8, 0, 10, 11, 12, 13, 9, 15, 14]$, $r=$ $4$,
the influence relation between the cipher-pixels and the corresponding plain-pixels is shown in
Fig.~\ref{fig:SolakAttack_counter}. Its binary matrix form is presented in Fig.~\ref{fig:InfluenceMatrix}, which demonstrates that the 9-th row has least non-zero elements.
According to Solak's attack method, one can get $w(9)=0$, which is contradict with the real value. As the initial step has cascaded influence on the succeeding steps,
the attack is totally failed under the given secret key.

\begin{figure}[!htb]
\centering
\includegraphics[width=0.7\imagewidth]{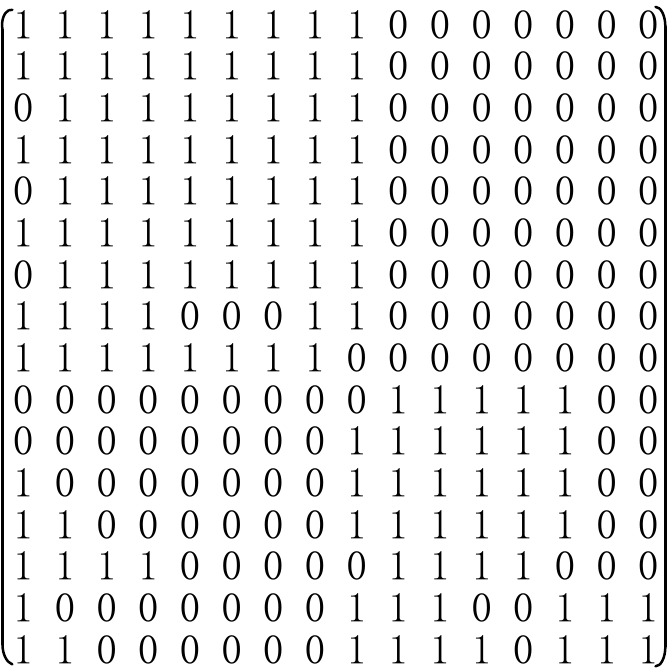}
\caption{Obtained influence matrix when $\vec{W}=[1, 3, 5, 7, 2, 4,$ $6, 8, 0, 10, 11, 12, 13, 9, 15, 14]$, and $r=4$.}
\label{fig:InfluenceMatrix}
\end{figure}

As shown in \cite[Sec.~4]{Solak:Fridrich:IJBC10}, when permutation vector $\vec{W}=[8, 7, 5, 11, 0, 10, 13, 14, 6, 2, 9, 1, 15, 4, 3, 12]$ and $r=3$,
the permutation vector can be solely recovered with Solak's chosen-ciphertext attack method. However, when it is slightly changed as
$\vec{W}=[8, 7, 5, 11,12,$ $10, 13, 14, 6, 2, 9, 1, 15, 0, 3, 4]$,
eight possible values are obtained by the attack (See Fig.~\ref{fig:KeyOutput}). To verify this point further, we performed Solak's attack on a plaintext of size $1\times 32$ with
1,000 randomly assigned $\vec{W}$ and three possible encryption rounds. The following five cases, in terms of attacking results, were counted:
1) the right key is enclosed; 2) the sole result is the right key; 3) both right key and wrong key exist; 4) no any result is obtained; 5) all found results are wrong.
Let $n_1$, $n_2$, $n_3$, $n_4$ and $n_5$ denote the number of the five cases occurring among 1,000 times random experiments, respectively.
The calculated results are shown in Table~\ref{tab:random}, which demonstrates that the attack results become more worse as
$2^r$ approaches $MN$ more, which agrees with analysis in the proof of Property~\ref{property:w0}.

\begin{figure}[!htb]
\centering
\includegraphics[width=\imagewidth]{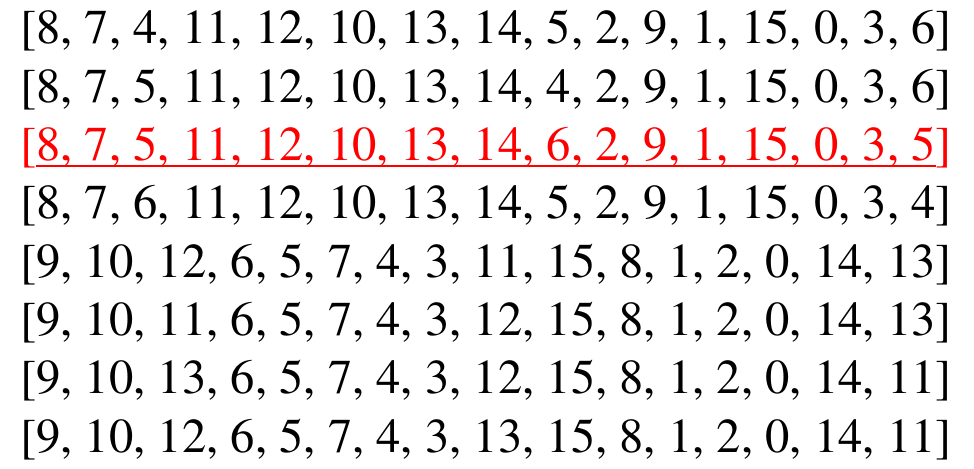}
\caption{The results of Solak's attack when
$\vec{W}=[8, 7, 5, 11$, $12, 10$, $13, 14$, $6, 2, 9, 1, 15, 0, 3, 4]$.
}
\label{fig:KeyOutput}
\end{figure}

\begin{table}[htb]
\centering
\caption{The number of five possible cases occurring among 1,000 random secret keys.}
\begin{tabular}{|p{1cm}|p{1cm}|p{1cm}|p{1cm}|}
\hline
$r$    &   2      &   3    &  4    \\ \hline
$n_1$  &   1000   &   957  &  814  \\ \hline
$n_2$  &   964    &   867  &  571  \\ \hline
$n_3$  &   36     &   90   &  243  \\ \hline
$n_4$  &   0      &   43   &  180  \\ \hline
$n_5$  &   0      &   0    &  6    \\ \hline
\end{tabular}
\label{tab:random}
\end{table}


\subsection{Real performance of extension of Solak's attack to Chen's scheme}

In \cite[Sec. 5]{Solak:Fridrich:IJBC10}, it was claimed that Solak's chosen-plaintext attack method can be applied to Chen's scheme easily and effectively
due to similar structure. However, we found some differences caused by the different basic encryption operations.

In \cite{YaobinMao:CSF2004,Mao:3Dbakermap:IJBC04}, Chen's scheme changes
Eq.~(\ref{Frid:confusion}) as
\begin{equation*}
\vec{I}'(i)=h(i)\oplus[\vec{I}^*(i)\boxplus h(i)]\oplus \vec{I}'(i-1).
\end{equation*}
Accordingly, Eq.~(\ref{Frid:deconfusion}) becomes
\begin{equation}
\vec{I}^*(i)=(h(i)\oplus\vec{I}'(i)\oplus\vec{I}'(i-1))\boxminus h(i).
\label{eq:chendediffusion}
\end{equation}
Combing Eq.~(\ref{eq:diffusion}) and Eq.~(\ref{eq:chendediffusion}), one has
\begin{equation}
\label{eq:chencombine}
\vec{I}(i)=(h(w(i))\oplus\vec{I}'(w(i))\oplus\vec{I}'(w(i)-1))\boxminus h(w(i)).
\end{equation}

When $\vec{W}=[1, 3, 5, 7, 2, 4,6, 8, 0, 10, 11, 12, 13, 9, 15,$ $ 14]$, and $r=4$, the number of different influence paths between a cipher-pixel and any plain-pixel
is shown in Fig.~\ref{fig:InfluenceMatrix2}. Due to the bitwise exclusive or (XOR) operation used in Eq.~(\ref{eq:chendediffusion}), the influence of a cipher-pixel on
a influenced plain-pixel may be cancelled when there is multiple influence paths between them. So, some influence paths can not be recovered.
Note that even error of one element of the influence matrix may fail a attacking step based on the sets comparison and disable the following steps due to the
cascading influence. More discussions on composite function of modulo addition and XOR operation can be found in \cite[Sec. 3.1]{Cqli:breakmodulo:IJBC13}. Figure~\ref{fig:InfluenceMatrix_chen} shows the obtained influence matrix by changing every cipher-pixel with the fixed value one, which has six unrecovered influence paths.
As the recovered influence matrix of sufficient accuracy is requisite condition for success of the Solak's attack method, one has to improve its correct ratio
by changing cipher-pixel with the other values (more cipher-images).

\begin{figure}[!htb]
\centering
\includegraphics[width=0.7\imagewidth]{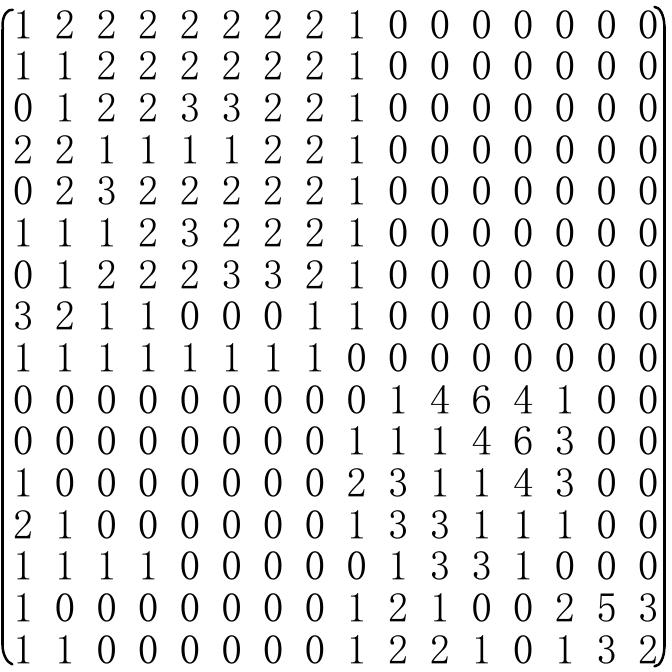}
\caption{The number of different influence paths between the $i$-th cipher-pixel and the $j$-th plain-pixel,
where $\vec{W}=[1, 3, 5, 7, 2, 4,$ $6, 8, 0, 10, 11, 12, 13, 9, 15, 14]$, and $r=4$.}
\label{fig:InfluenceMatrix2}
\end{figure}

\begin{figure}[!htb]
\centering
\includegraphics[width=0.7\imagewidth]{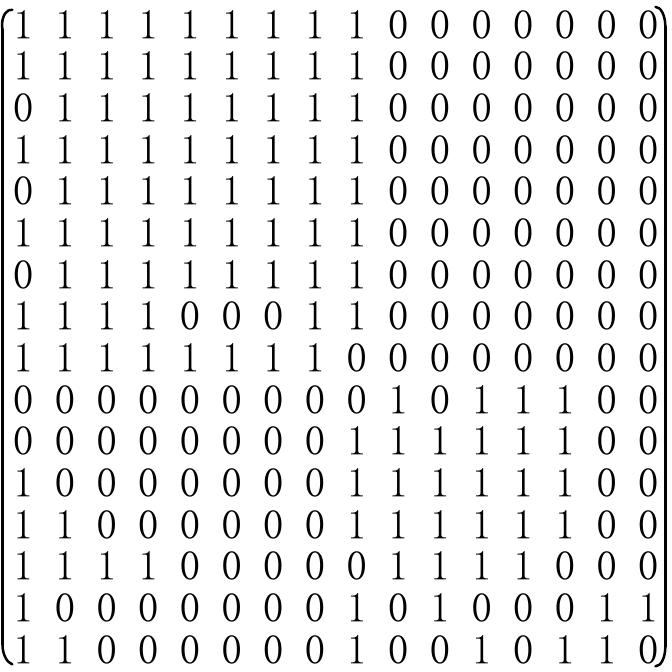}
\caption{Obtained influence matrix when $\vec{W}=[1, 3, 5, 7, 2, 4,$ $6, 8, 0, 10, 11, 12, 13, 9, 15, 14]$, and $r=4$.}
\label{fig:InfluenceMatrix_chen}
\end{figure}

Note that equivalent version of position permutation and value substitution of 1-round version Chen's scheme was successfully recovered with some chosen-plaintexts
in \cite{Kaiwang:PLA2005}. Its weak sensitivity with respect to changes of secret key and plaintext was demonstrated in detail in \cite{Li:AttackingMaoScheme2008}.

\section{Conclusion}

Based on the work in \cite{Solak:Fridrich:IJBC10}, this paper formulated some properties of Fridrich's chaotic image encryption scheme with matrix theory and
reported some minor defects of Solak's chosen-ciphertext attack method on it. The work may help designers of chaotic encryption schemes to realize fundamental
importance of the underlying encryption architecture for security performance \cite{Cqli:Logistic:IJBC15}. The following problems on cryptanalyzing Fridrich's chaotic image encryption scheme deserve further investigation: decreasing the required number of chosen-ciphertext with the special properties of the influence matrix between cipher-pixels and the corresponding plain-pixels; reducing computational complexity of the chosen-ciphertext attack;
disclosing the influence matrix under the scenario of known/chosen-plainext attack.

\section*{Acknowledgement}

This research was supported by Hunan Provincial Natural Science Foundation of China (No.~2015JJ1013),
Scientific Research Fund of Hunan Provincial Education Department~(No.~15A186), and the National Natural Science Foundation of China (No.~61532020).

\bibliographystyle{elsarticle-num}
\bibliography{CAT_Elsevier}
\end{document}